\title{Initiality for Typed Syntax and Semantics}
\author{Benedikt Ahrens}
\institute{Universit\'e Nice Sophia Antipolis, France}
\newcommand{\sourceurl}{\url{http://math.unice.fr/laboratoire/logiciels}}
\begin{document}

\maketitle

\begin{abstract}
 We give an algebraic characterization of the syntax and semantics 
of a class of simply--typed languages, such as the language $\PCF$: we characterize simply--typed binding syntax equipped with reduction rules
via a universal property, namely as the initial object of some category.
For this purpose, we employ techniques developed in two previous works: in \cite{ahrens_ext_init}, we 
model syntactic translations between languages \emph{over different sets of types} as initial morphisms
in a category of models. In \cite{ahrens_relmonads}, we characterize untyped syntax \emph{with reduction rules}
as initial object in a category of models.
In the present work, we show that those techniques are modular %  
enough to be combined:
we thus characterize
simply--typed syntax with reduction rules as initial object in a category. 
The universal property yields an operator which allows to specify
translations --- that are semantically faithful by construction --- 
between languages over possibly different sets of types. %

We specify a language by a \emph{2--signature}, that is, a signature on two levels: 
the \emph{syntactic} level specifies the types and terms of the language, and 
associates a type to each term.
The \emph{semantic} level specifies, through \emph{inequations}, reduction rules
on the terms of the language.
To any given 2--signature we associate a category of models.
We prove that this category has an initial object, which integrates the
types and terms freely generated by the 2--signature, and the reduction relation on those terms
generated by the given inequations.
We call this object the \emph{(programming) language generated by the 2--signature}.
\end{abstract}

% \microtypesetup{protrusion=false}
%  \tableofcontents
% \microtypesetup{protrusion=true}

\section{Introduction}

We give a characterization, via a universal property, of the syntax and semantics of 
simply--typed languages with variable binding.
More precisely, we characterize the terms and sorts associated to a signature 
equipped with reduction rules as the initial object in a category of models.
Initiality in this category gives rise to an iteration principle (cf.\ \autoref{rem:comp_sem_iteration})
which allows to specify translations between languages in a convenient way as initial morphisms.
The category of models is sufficiently large --- and thus the iteration principle stemming from 
initiality is sufficiently general --- to account for translations between 
languages over different sets of sorts.
Furthermore, translations specified via this principle are ensured to be faithful with respect to 
reduction in the source and target languages, as well as compatible in a suitable sense
with substitution on either side.

To illustrate the iteration operator stemming from initiality, we use it to 
specify a translation from $\PCF$ to the untyped lambda calculus $\ULC$. 
We do so in the proof assistant \textsf{Coq} \cite{coq}; for this purpose, we prove formally, in \textsf{Coq}, an instance of our main theorem
for the 2--signature of $\PCF$: 
the types and terms of $\PCF$, equipped with their usual reductions,
form an initial object in the category of models of $\PCF$. We then use the iteration principle to 
obtain an initial morphism --- a translation, faithful with respect to reductions --- to $\ULC$, as an executable \textsf{Coq} function.
The \textsf{Coq} theory files as well as documentation are available online at
 \sourceurl.

\subsubsection{Summary}

We define a notion of \emph{2--signature} which allows the specification of the \emph{types and terms}
of a language --- via an underlying 1--signature --- as well as its \emph{semantics} in form of reduction rules.
A 1--signature $(S,\Sigma)$ is given by a pair of a signature $S$ for types and a binding signature $\Sigma$ for 
terms typed over the set of types associated to $S$.
Reduction rules for terms generated by $\Sigma$ are specified via a set $A$ of inequations over $(S,\Sigma)$.
A 2--signature $((S,\Sigma),A)$ is a pair of a 1--signature $(S,\Sigma)$ and a set $A$ of inequations over $(S,\Sigma)$.
To such a 2--signature we associate a category of representations, for which the types and terms generated by $(S,\Sigma)$, 
equipped with reductions according to $A$, forms an initial object.

\emph{1--signatures} are defined in \cite{ahrens_ext_init}. There, we associate a category $\Rep(S,\Sigma)$  
of representations to any 1--signature $(S,\Sigma)$, and show that the types and terms
freely generated by $(S,\Sigma)$ form an initial object in this category.
Representations there are built from monads on families of sets.
In the present work, we build a different category $\Rep^{\Delta}(S,\Sigma)$ of representations using 
\emph{relative} monads from sets to preordered sets, which allows --- in a second step, cf.\ below --- the integration 
of reduction rules to account for semantic aspects.
The two categories of representations, $\Rep(S,\Sigma)$ and $\Rep^{\Delta}(S,\Sigma)$, are connected through
an adjunction which transports the initial object of the former to the latter category (cf.\ \autoref{lem:init_no_eqs_typed}).

\emph{Inequations} over untyped 1--signatures are considered in \cite{ahrens_relmonads}. There, we define a notion of
2--signature for untyped syntax with semantics in form of reduction rules and show that its associated category of representations 
has an initial object.
In the present work, we define inequations over \emph{typed} 1--signatures as defined in \cite{ahrens_ext_init}.
Given a set $A$ of inequations over a 1--signature $(S,\Sigma)$, 
the representations of $(S,\Sigma)$ that \emph{satisfy} each inequation of $A$, form a full subcategory of $\Rep^{\Delta}(S,\Sigma)$,
which we call the
category of representations of $(S,\Sigma,A)$.
Our main theorem (cf.\ \autoref{thm:init_w_ineq_typed}) 
states that this category has an initial object, which integrates the types and terms freely generated by $(S,\Sigma)$,
equipped with reduction rules generated by the inequations of $A$.

\subsubsection{Related Work}\label{sec:rel_work}

Related work is reviewed extensively in \cite{ahrens_relmonads,ahrens_ext_init}, as well as in the author's 
PhD thesis \cite{ahrens_phd}.
We give a brief overview:
rewriting in nominal settings is examined by F\'ernandez and Gabbay \cite{fernandez_gabbay_nominal_rewriting}.
Ghani and L\"uth \cite{DBLP:journals/njc/GhaniL03} present rewriting for algebraic theories without variable binding;
they characterize 
equational theories 
resp.\ rewrite systems 
 as \emph{coequalizers} resp.\ \emph{coinserters} in a category of monads on
the categories $\Set$ resp.\ $\PO$. 
Fiore and Hur \cite{DBLP:conf/icalp/FioreH07} have extended Fiore's work to integrate semantic aspects into initiality results.
In particular, Hur's thesis \cite{hur_phd} is dedicated to \emph{equational} systems for syntax with variable binding.
In a ``Further research'' section \cite[Chap.\ 9.3]{hur_phd}, Hur suggests the use of preorders, or more generally, 
arbitrary relations to model \emph{in}equational systems.
Hirschowitz and Maggesi \cite{DBLP:conf/wollic/HirschowitzM07} prove initiality of the set of lambda terms modulo beta and eta conversion
in a category of \emph{exponential monads}.
In an unpublished paper \cite{journals/corr/abs-0704-2900}, they define a notion of \emph{half--equation} and 
\emph{equation} to express congruence between terms. We adopt their definition in this paper, but interpret a pair
of half--equations as \emph{in}equation rather than equation.

\section{Relative Monads and Modules}

\emph{Relative monads} were defined by Altenkirch et al.\ \cite{DBLP:conf/fossacs/AltenkirchCU10} to overcome 
the restriction of (regular) monads to \emph{endo}functors.
In an earlier work \cite{ahrens_relmonads}, we define morphisms of relative monads and \emph{modules over relative monads}.
In the following section we define a more general notion of \emph{colax} morphism of relative monads ---
which we use in \autoref{chap:comp_types_sem} to model translations between languages over different sets of types ---
and generalize constructions of \cite{ahrens_relmonads} to such colax morphisms. 
Some definitions from \cite{ahrens_relmonads,ahrens_ext_init} which we use in the present work, 
are recalled at the beginning.

We denote by $\Set$ the category of sets and total maps of sets. We call $\PO$ the 
category of preordered sets and monotone maps between them. 
% Finally we call $\PS$ the category of preordered sets and \emph{all} set--theoretic maps between them.

\begin{definition}%[Functor $\Delta:\Set\to\PO$ and Forgetful Functor]
\label{def:delta}
\label{lem:adj_set_po}
 We call $\Delta : \SET\to\PO$ the left adjoint of the forgetful functor $U : \PO\to\Set$. 
 The functor $\Delta$ associates  
 to each set $X$ the set itself together with the smallest preorder, i.e.\ the diagonal of $X$,
 $\Delta(X) := (X,\delta_X)$. 
%  The functor $\Delta:\Set\to \PO$ is a \emph{full embedding}, i.e.\ it is fully faithful and injective on objects.
% In the other direction we have a \emph{forgetful} functor $U:\PO\to\Set$ which maps any preordered set $(X, \leq)$
%  to the set $X$. 
%  We have $\comp{\Delta}{U} = \Id_{\Set}$.
% \end{definition}
% \begin{lemma}\label{lem:adj_set_po}
%  The forgetful functor $U : \PO\to\Set$ %from the category $\PO$ of preordered sets to the category of sets
%   is right adjoint to the diagonal functor $\Delta:\Set\to\PO$,
% :
% \begin{equation*}   
%        \begin{xy}
%         \xymatrix @C=4pc {
%                   **[l]\Set \rtwocell<5>_U^{\Delta}{'\bot} & **[r]\PO
% }
%        \end{xy} \enspace ,
% %\Rep^{\Delta}(S) : \Delta \dashv U : \Rep(S) 
% %  \label{eq:adjunction}
% \end{equation*}
%  that is, the embedding $\Delta:\SET\to\PO$ is a coreflection.

% We denote by $\varphi$ the family of isomorphisms
%    $   \varphi_{X,Y} : \PO(\Delta X, Y) \cong \Set(X,UY)$.
% We omit the indices of $\varphi$ whenever they can be deduced from the context.

%     \[  \Delta \dashv U, \quad \varphi_{X,Y} : \PO(\Delta X, P) \cong \Set(X, UP) \]
\end{definition}

\begin{definition}[Category of Families]\label{def:TST}\label{def:TS}\label{rem:adj_set_po_typed}
 Let $\C$ be a category and $T$ be a set, i.e. a discrete category.
 We denote by $\family{\C}{T}$ the functor category, an object of which is a $T$--indexed family of objects of $\C$.  
%  Given two families $V$ and $W$, a morphism $f : V \to W$ is a family of morphisms in $\C$,
%   \[ f : t \mapsto f(t) : V(t) \to W(t) \enspace . \]
We write $V_t := V(t)$ for objects and morphisms.
 Given a functor $F : \C\to \D$, we denote by $\family{F}{T} : \family{\C}{T} \to \family{\D}{T}$ the induced functor.
% %   defined on objects and morphisms as
% %  \[ \family{F}{T} : \family{\C}{T} \to \family{\D}{T}, \quad f \mapsto \bigl(t \mapsto F (f_t)\bigr) \enspace . \]
% 
% 
% \begin{remark}
%  For a set $T$, the adjunction of \autoref{lem:adj_set_po} induces an adjunction
%     $\TDelta{T}\dashv \family{U}{T}$.

%  \[ \family{\Delta}{T} \dashv \family{U}{T},\quad \varphi_{V,W} : \TP{T}(\Delta V,W) \cong \TS{T}(V,UW) \]
%    \[ {\Delta} \dashv {U},\quad \varphi_{V,W} : \TP{T}(\Delta V,W) \cong \TS{T}(V,UW) \]
% \[
%    \begin{xy}
%         \xymatrix@C=4pc{
%                    **[l]\family{\Set}{T}\rtwocell<5>_{\family{U}{T}}^{\family{\Delta}{T}}{'\bot} & **[r]\family{\PO}{T}
% }
%        \end{xy} \enspace .
% \]
% where we use the notation defined in \autoref{not:forget_exponent}. %write $\Delta$ and $U$ for $\family{\Delta}{T}$ and $\family{U}{T}$, 
% and
%  \[  \family{U}{T}  \dashv \family{\nabla}{T}  \enspace . \]
% \end{remark}
\end{definition}

\begin{definition}[Relative Monad on $\TDelta{T}$, enriched]
\label{def:strenghtened_delta}
 We strengthen the definition of a relative monad $P$ on $\TDelta{T}$ by requiring the 
 substitution map $\sigma_{X,Y}$ to be monotone with respect to the preorders induced by the preorders on $PY$,
 \[ \sigma_{X,Y} : \TP{T}(\Delta X,PY) \to \TP{T}(PX,PY) \enspace . \]
\end{definition}

\noindent
From now on, a relative monad on $\TDelta{T}$ is meant to be enriched in 
the sense of \autoref{def:strenghtened_delta}, i.e.\ monotone in both the first-- and the higher--order
argument.

\begin{example}[Lambda Calculus as Relative Monad on $\TDelta{T}$]\label{ex:ulcbeta}
Let $T := \TLCTYPE$ be the set of types of the simply--typed lambda calculus, built from a base type and a binary arrow constructor.
Given a set family $V\in \TS{\TLCTYPE}$, we denote by $\TLC(V) \in \TS{\TLCTYPE}$ the set family of simply--typed lambda terms over
$\TLCTYPE$ in context $V$, which might be implemented in the proof assistant \textsf{Coq} as follows: % (cf.\ \autoref{code:tlc_terms}).
\begin{lstlisting}
Inductive TLC (V : T -> Type) : T -> Type :=
  | Var : forall t, V t -> TLC V t
  | Abs : forall s t TLC (V + s) t -> TLC V (s ~> t)
  | App : forall s t, TLC V (s ~> t) -> TLC V s -> TLC V t.
\end{lstlisting}
Here \lstinline!V + s! is a notation denoting the context \lstinline!V! extended by a fresh variable of type \lstinline!s! --- the variable 
that is bound by the constructor \lstinline!Abs s t!.
 We occasionally leave the object type arguments of the constructors implicit and
 write $\lambda M$ and $M(N)$ for \lstinline!Abs s t M! and \lstinline!App s t M N!, respectively.
The set family of lambda terms is equipped with a structure of a monad 
$\TLC$ on the category $\TS{\TLCTYPE}$ as follows \cite{DBLP:journals/iandc/HirschowitzM10}:
the family $\we^{\TLC}$ is given by the family of constructors \lstinline!Var!, % (cf.\ \autoref{code:tlc_terms}), 
and the substitution map is given by capture--avoiding 
simultaneous substitution:
\[\sigma_{X,Y} : \TS{T}\bigl(X, \TLC(Y)\bigr) \to \TS{T}\bigl(\TLC(X),\TLC(Y)\bigr) \enspace .\]

\noindent
Similarly, with the same operations $\eta$ and $\sigma$, 
we can consider it as a \emph{relative} monad on the functor $\TDelta{\TLCTYPE}$,
\[ \TLCDELTA : \TS{\TLCTYPE} \to \TP{\TLCTYPE} \enspace . \]
 The underlying object map $\TLCDELTA$ associates, to each 
set family $V$, the family of lambda terms in context $V$, equipped with the diagonal preorder,
corresponding to \emph{syntactic equality}:
 \[\TLCDELTA : V \mapsto \left(\TLC(V), \delta_{\TLC(V)}\right) \enspace . \]

\noindent
We equip each set $\TLC(V)(t)$ of lambda terms over context $V$ of object type $t$  with a preorder 
taken as the reflexive--transitive closure of the 
relation generated by the \emph{beta} rule
\begin{equation} \lambda  M (N) ~ \leq ~ M [*:= N] \tag{($\beta$)} \label{eq:beta_rule}\end{equation}
and its propagation into subterms:
 \[ \TLCB : V \mapsto \left(\TLC(V), \beta^{*}_{\TLC(V)}\right) \enspace . \]
The beta rule in \autoref{eq:beta_rule} states that the application of a lambda abstraction with body $M$ to an argument $N$
reduces to the term $M$ in which the term $N$ is substituted for the fresh variable of $M$ --- recall from above that
 $M$ lives in an extended context ---  in a capture--avoiding manner.
 This assignment defines a relative monad $\TLCB$ on the functor $\TDelta{T}:\TS{T}\to \TP{T}$.

\end{example}

\emph{Modules over relative monads} and their morphisms are defined in \cite{ahrens_relmonads}, together with several constructions of modules. 
Recall that modules over $P$ with codomain $\E$ and morphisms between them form a category called $\RMod{P}{\E}$.
We give some examples of modules and module morphisms over the monad  $\TLCB$,
which hold analogously for the monad $\TLCDELTA$:

\begin{example}[\autoref{ex:ulcbeta} cont.]
\label{ex:ulcb_taut_mod}\label{ex:ulcb_der_mod}\label{ex:ulcb_prod_mod}\label{ex:ulcb_constructor_mod_mor}
 The map $\TLCB : V \mapsto \TLCB(V)$ yields a module over the relative monad $\TLCB$, the \emph{tautological $\TLCB$--module} $\TLCB$.
 Given $V \in \TS{T}$ and $s\in T$, we denote by $V ^ s$ the context $V$ enriched by an additional variable of type $s$.
The map $\TLCB^s : V \mapsto \TLCB(V^s)$ inherits the structure of a $\TLCB$--module from the
  tautological module $\TLCB$. % (cf.\ \autoref{ex:ulcb_taut_mod}).
We call $\TLCB^s$ the \emph{derived module with respect to $s\in T$} of the module $\TLCB$. 
 Given $t \in T$, the map $V\mapsto \TLCB(V)(t) : \TS{T} \to \PO$ inherits a structure of a $\TLCB$--module, the 
\emph{fibre module $\fibre{\TLCB}{t}$ with respect to $t\in T$}.
 Given $s,t\in T$, the map $V\mapsto \TLCB(V)(s \TLCar t)\times \TLCB(V)(s)$ inherits a structure of a $\TLCB$--module.
 Finally, the constructors of abstraction \lstinline!Abs s t! and application \lstinline!App s t! 
are carriers of morphisms of $\TLCB$--modules:
 \begin{align*} \Abs_{s,t} : \fibre{\TLCB^{s}}{t} \to \fibre{\TLCB}{s\TLCar t} \enspace , \quad
                \App_{s,t} : \fibre{\TLCB}{s\TLCar t} \times \fibre{\TLCB}{s} \to \fibre{\TLCB}{t} \enspace .
 \end{align*}
Analogous remarks hold for the monad $\TLCDELTA$ and modules over this monad.
\end{example}

\noindent
As in \autoref{ex:ulcbeta}, we consider a language over a set $T$ of types as a (relative) monad on $\TDelta{T}$.
Translations between languages are given by \emph{colax morphisms of monads}:

\begin{definition}%[Colax Morphism of Relative Monads] 
 \label{def:colax_rel_mon_mor}
Suppose given two relative monads $P : \C\stackrel{F}{\to}\D$ and 
$Q : \C'\stackrel{F'}{\to}\D'$. A \emph{colax morphism of relative monads} from $P$ to $Q$ is a
quadruple $h = (G,G',N,\tau)$ of
a functor $G\colon \C\to\C'$, a functor $G': \D \to \D'$ as well as a natural transformation
$N: F'G \to G'F$ and a natural transformation $\tau : PG' \to GQ$
such that the following diagrams commute for any objects $c,d$ and any suitable morphism $f$:
\begin{equation*} %\label{eq:colax_rmon_mor}
 \begin{xy}
  \xymatrix @C=5pc @R=2pc {
  G'Pc \ar[r]^{G'\kl[P]{f}} \ar[d]_{\tau_c}& G'Pd \ar[d]^{\tau_d} \\
  QGc \ar[r]_{\kl[Q]{\comp{Nc}{\comp{G'f}{\tau_d}}}} & QGd 
}
 \end{xy}
\qquad
\begin{xy}
  \xymatrix @=2pc {
 F'Gc \ar[r]^{Nc} \ar[rrd]_{\we^Q_{Gc}} & 
                 G'F c \ar[r]^{G'\we^P_c}& G'Pc \ar[d]^{\tau_c} \\
{} & {} & QGc.
}
\end{xy}
\end{equation*}
\end{definition}

Given a morphism of relative monads $h : P \to Q$ and a $Q$--module $N$ with codomain $\E$, 
we define the \emph{pullback} $P$--module $h^*N$, also with codomain $\E$:
\begin{definition}%[Pullback Module] 
We define the \emph{pullback of $N$ along $h$} 
with object map $c\mapsto M (Gc)$
and with substitution map, for $f : Fc \to Pd$, as
 $\mkl[h^*M] f := \mkl[M]{\comp{N_c}{\comp {G'f}{\tau_d}}}$.
The pullback extends to module morphisms and is functorial. 
\end{definition}

Given two languages over different object types $T$ and $T'$, modelled as relative monads $P$ and $Q$
on $\TDelta{T}$ and $\TDelta{T'}$, respectively, we model a translation from $P$ to $Q$
by a colax monad morphism whose underlying functors are \emph{retyping functors}:

\begin{definition}[Retyping Functor]\label{rem:retyping_adjunction_kan}
Let $g:T\to T'$ be a map of sets, and let $\C$ be a cocomplete category.
The map $g$ induces a functor $g^*:\family{\C}{T'} \to \family{\C}{T}$ by postcomposition,
 $W \mapsto \comp{g}{W}$.
  The \emph{retyping functor $\retyping{g}:\family{\C}{T} \to \family{\C}{T'}$ associated to $g:T\to T'$}
  is defined as the left Kan extension operation 
  along $g$, that is, we have an adjunction $\retyping{g} \dashv g^*$.
\end{definition}

\begin{remark} \label{rem:rel_mon_mor_case}
  We are going to use the following instance of \autoref{def:colax_rel_mon_mor}:
  $P$ and $Q$ are monads --- e.g., languages --- on $\TDelta{T}$ and $\TDelta{T'}$, for sets $T$ and $T'$ of object types.
  The functors $G$ and $G'$ are the retyping functors (cf.\ \autoref{rem:retyping_adjunction_kan}) associated to some 
  translation of types $g : T \to T'$, and $N$ is the identity transformation. Then $\tau$ denotes a translation of terms
  from $P$ to $Q$:
 \[
 \begin{xy}
  \xymatrix @C=2pc @R=1.5pc {
       **[l] \TS{T} \ar[r]^{\TDelta{T}} \ar[d]_{\retyping{g}}   & **[r] \TP{T} \ar[d]^{\retyping{g}}\\
       **[l] \TS{T'}\ar[r]_{\TDelta{T'}} & **[r]\TP{T'} \ultwocell<\omit>{\Id}
}
 \end{xy} \qquad
\begin{xy}
  \xymatrix @C=2pc @R=1.5pc {
       **[l] \TS{T} \ar[r]^{P} \ar[d]_{\retyping{g}} \drtwocell<\omit>{\tau}  & **[r] \TP{T} \ar[d]^{\retyping{g}}\\
       **[l] \TS{T'}\ar[r]_{Q} & **[r]\TP{T'}. 
}
 \end{xy}
\]
\end{remark}

\noindent
A family of constructors, such as the family $(\Abs_{s,t})_{s,t\in \TLCTYPE}$ of \autoref{ex:ulcb_constructor_mod_mor},
is modelled via a family of module morphisms of suitable domain and codomain.
Equivalently, via \emph{uncurrying},
we can consider such a family as \emph{one} module morphism between two suitable modules:
intuitively, the idea is to write $\Abs(V,s,t) : \TLCB(V^s)(t) \to \TLCB(V)({s\TLCar t})$ 
instead of $\Abs_{s,t}(V)$.
For this to work, an object of the domain category of the source and target modules of $\Abs$ must be
of the form $(V,s,t)$, where $V$ is a context and $s,t\in \TLCTYPE$. More generally:

\begin{definition}[Pointed index sets]\label{def:cat_indexed_pointed}\label{def:cat_set_pointed}
\label{def:retyping_functor_pointed}
  Given a category $\C$, a set $T$ and a natural number $n$, we denote by $\family{\C}{T}_n$ the category
  with, as objects, diagrams of the form
    $n \stackrel{\vectorletter{t}}{\to} T \stackrel{V}{\to} \C$,
  written $(V, t_1, \ldots, t_n)$ with $t_i := \vectorletter{t}(i)$.
  A morphism $h$ to another such $(W,\vectorletter{t}) $
  with the same pointing map $\vectorletter{t}$ is given by a morphism $h : V\to W$ in $\family{\C}{T}$.
  
Any functor $F : \family{\C}{T} \to \family{\D}{T}$ extends to $F_n : \family{\C}{T}_n \to \family{\D}{T}_n$ via
   $ F_n (V,\vectorletter{t}) := (FV, \vectorletter{t})$.
Also, any relative monad $R$ over $F$ induces a monad $R_n$ over $F_n$.

Given a map of sets $g:T\to T'$, by postcomposing the pointing map with $g$, 
the retyping functor (cf.\ \autoref{rem:retyping_adjunction_kan}) generalizes to the functor
 \[ \retyping{g}(n) : \family{\C}{T}_n \to \family{\C}{T'}_n \enspace , 
\quad (V, \vectorletter{t}) \mapsto \left(\retyping{g} V, \comp{\vectorletter{t}}{g}\right) \enspace .  \] 
\end{definition}

\noindent
 The category $\family{\C}{T}_n$ consists of $T^n$ copies of $\family{\C}{T}$, which do not interact.
 Due to the ``markers'' $(t_1, \ldots, t_n)$ we can act differently on each copy, 
  cf. \autorefs{def:derived_rel_mod_II} and \ref{def:fibre_rel_mod_II}.

Two important constructions on modules over monads of \cite{ahrens_ext_init},
derivation and fibre modules, carry over to 
modules over monads on $\family{\Delta}{T}$.
Intuitively, derivation corresponds to considering terms in an extended context, whereas the fibre corresponds to picking
terms of a specific object type. Since we consider varying sets of types, the object type for context extension and fibre
is chosen through a natural transformation, which picks an element of any set.

Given $u\in T$, we denote by $D(u)\in \TS{T}$ the context with $D(u)(u) = \{*\}$ and $D(u)(t) = \emptyset$ for $u \neq t$.
For a context $V \in \TS{T}$ we set $V^{*u} := V + D(u)$.

Given a category $\C$ and $n\in \mathbb{N}$, we denote by $\T \C_n$ the category an object of which is a triple $(T,V,\vectorletter{t})$ of a set $T$,
a $T$--indexed family $V$ of objects of $\C$ and a vector $\vectorletter{t}$ of length $n$ of elements of $T$.
Note that for a fixed set $T$, the category $\family{\C}{T}_n$ is the fibre over $T$ of the forgetful functor $\T U_n : \T\C_n \to \Set$ which 
maps an object $(T,V,\vectorletter{t})$ to its indexing set $T$.
  Let $1 : \T\C_n \to \Set$ be the constant functor mapping to the singleton set.
 For a natural transformation $\tau : 1 \Rightarrow \T U_n$, 
we write $\tau (T,V,\vectorletter{t}) := \tau (T,V,\vectorletter{t})(*) \in T$, i.e.\ we omit 
the argument from the singleton set. Intuitively, such $\tau$ picks an element of $T$ of any object $(T,V,\vectorletter{t})\in \T \C_n$.

\begin{example}\label{ex:numbers_transformation}
 For $ 1 \leq k \leq n$, we denote by $k : 1 \Rightarrow \T U_n : \T\C_n \to \Set$ the natural transformation such that $k(T,V,\vectorletter{t})(*) := \vectorletter{t}(k)$.
\end{example}

\begin{definition}[Context Extension]
\label{def:derived_rel_mod_II}
Let $\tau$ be as above, and let $T$ be a fixed set.
Given a monad $P$ on $\TDelta{T}_n$ and a $P$--module $M$ with codomain $\E$, 
we define the \emph{derived module of $M$ with respect to $\tau$} by setting $M^\tau(V,\vectorletter{t}) := M (V^{*\tau(T,V,\vectorletter{t})}, \vectorletter{t})$.
\end{definition}

\begin{definition}[Fibre]
\label{def:fibre_rel_mod_II}
 Let $\tau$ be as in \autoref{def:derived_rel_mod_II}. Let $P$ be a monad on $\TDelta{T}_n$ and $M$ be a $P$--module
 with codomain category $\family{\E}{T}_n$.
 The \emph{fibre module $\fibre{M}{\tau}$ of $M$ with respect to $\tau$} has object map
  $ (V,\vectorletter{t}) \mapsto M(V,\vectorletter{t})(\tau(T,V,\vectorletter{t}))$, that is,
 the component $\tau(T,V,\vectorletter{t})$ of $M$, forgetting also the pointing map $\vectorletter{t}$.
\end{definition}

\begin{example}[\autoref{ex:ulcb_constructor_mod_mor} cont.]\label{ex:tlc_mod_higher_degree}
 Let $T := \TLCTYPE$.
 According to \autoref{def:cat_indexed_pointed}, we have a relative monad --- and its associated 
tautological module --- ${\TLCB}_2$ on the functor 
$\TDelta{T}_2 : \TS{T}_2 \to \TP{T}_2$.
Let $i : 1 \Rightarrow TU_2 : T\C_2 \to \Set$, for $i = 1, 2$, be a natural transformation as in \autoref{ex:numbers_transformation}.
Then we have a ${\TLCB}_2$--module 
\[{{\TLCB}_2}^1 : (V,s,t) \mapsto \TLCB_2(V^s,s,t) \enspace . \]
We also have a ${\TLCB}_2$--module 
\[ \fibre{{\TLCB}_2}{2} : (V,s,t)\mapsto \TLCB_2(V,s,t)(t) \enspace . \]
Again, as in \autoref{ex:ulcb_constructor_mod_mor}, analogous remarks hold for $\TLCDELTA$.
\end{example}

\begin{remark}[Module of Higher Degree corresponds to a Family of Modules]
  \label{rem:family_of_mods_cong_pointed_mod_relative}
   Let $T$ be a set and let $R$ be a monad on the functor $\family{\Delta}{T}$.
    Then a module $M$ over the monad $R_n$ corresponds precisely to a family of $R$--modules 
    $(M_{\vectorletter{t}})_{\vectorletter{t}\in T^n}$ by (un)currying.
  Similarly, a morphism $\alpha:M\to N$ of modules of degree $n$ is equivalent to a family 
   $(\alpha_{\vectorletter{t}})_{\vectorletter{t}\in T^n}$ of morphisms of modules of degree zero with
     $\alpha_{\vectorletter{t}}:M_{\vectorletter{t}}\to N_{\vectorletter{t}}$.
\end{remark}

\section{Signatures, Representations, Initiality}\label{chap:comp_types_sem}

We combine the techniques of \cite{ahrens_ext_init} and \cite{ahrens_relmonads} in order
to obtain an initiality result for simple type systems with reductions on the term level.
As an example, we specify, via the iteration principle stemming from the universal property, a 
semantically faithful translation 
from $\PCF$ with its usual reduction relation to the untyped lambda calculus with beta reduction.
Analogously to \cite{ahrens_relmonads}, we define a notion of \emph{2--signature} with two levels:
a \emph{syntactic} level specifying types and terms of a language, and, on top, a \emph{semantic} level
specifying reduction rules on the terms. 

The syntactic level itself --- given by a \emph{1--signature} $(S,\Sigma)$, 
cf.\ \autoref{def:1--sig} --- specifies the \emph{types} of the language, 
via an algebraic signature $S$,
as well as terms that are typed over the types specified by $S$, via a signature $\Sigma$ over $S$.
In a first result (cf.\ \autoref{lem:init_no_eqs_typed}) we characterize the language
generated by a 1--signature, and equipped with the \emph{equality preorder}, as an initial object
of a category of representations. An instance of this theorem is given in \autoref{ex:tlc_mod_mor_higher_degree},
where $\TLCDELTA$, equipped with two module morphisms given by the constructors \lstinline!Abs! and \lstinline!App!,
is characterized as the initial representation of a suitable 1--signature.

Afterwards we equip 1--signatures with \emph{inequations}, yielding 2--signatures (cf.\ \autoref{def:2--sig}). 
We prove an initiality result for 
those 2--signatures (cf.\ \autoref{thm:init_w_ineq_typed}), 
an instance of which characterizes the simply--typed lambda calculus $\TLCB$ 
with beta reduction as initial representation (cf.\ \autoref{ex:rep_2--sig_tlc}).

\subsubsection{Signatures for Types}

We consider sets of types that are specified by algebraic signatures, which 
are presented in \cite{ahrens_ext_init}. We review briefly:
\begin{definition}[Algebraic Signature]
 An \emph{algebraic signature} is given by a family of natural numbers.
\end{definition}
Intuitively, each natural number of such a family specifies the number of arguments of its associated type constructor.

\begin{example}\label{ex:type_sig_SLC}
 The types of the simply--typed lambda calculus are specified via the 
 algebraic signature $S_{\SLC} := \{ * : 0,\enspace (\TLCar) : 2\}$.
\end{example}

\begin{example}\label{ex:type_PCF}
  The language $\PCF$ \cite{Plotkin1977223} 
is a simply--typed lambda calculus with a fixed point operator
  and arithmetic constants.
The signature of the types of \PCF~is given by  
   $S_{\PCF}:= \lbrace\Nat:0 ,\enspace  \Bool: 0 ,\enspace (\PCFar): 2 \rbrace$.
 A representation $T$ of $S_{\PCF}$ is given by a set $T$ and three operations of suitable arities. 
 A morphism of representations is a map of sets compatible with the operations on either side.
\end{example}

\begin{lemma}\label{lem:initial_sort}
 Let $S$ be an algebraic signature. The category of representations of $S$ has an initial object $\hat{S}$. 
\end{lemma}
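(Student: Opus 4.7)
The plan is to construct $\hat{S}$ as the free term algebra on $S$ and then verify its universal property by the standard induction/recursion argument for algebraic theories.

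First I would define $\hat{S}$ inductively: it is the smallest set such that for every constructor $c$ of arity $n$ in $S$ and every $n$-tuple $(t_1,\ldots,t_n)\in \hat{S}^n$, the formal tree $c(t_1,\ldots,t_n)$ belongs to $\hat{S}$. The representation structure on $\hat{S}$ is the obvious one: each constructor symbol $c$ of $S$ with arity $n$ is interpreted by the map $\hat{c}:\hat{S}^n \to \hat{S}$ sending $(t_1,\ldots,t_n)$ to $c(t_1,\ldots,t_n)$. This makes $\hat{S}$ an object of the category of representations of $S$ (cf.\ \autoref{ex:type_PCF}).

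Next I would show initiality. Given any representation $T$ with operations $(c^T:T^n \to T)_{c\in S}$, define $f:\hat{S}\to T$ by structural recursion on terms: $f(c(t_1,\ldots,t_n)) := c^T(f(t_1),\ldots,f(t_n))$. This recursion is justified by the inductive definition of $\hat{S}$, and its defining equation makes $f$ automatically a morphism of representations, since it commutes by construction with each operation $\hat{c}$ on the left and $c^T$ on the right. For uniqueness, suppose $g:\hat{S}\to T$ is any morphism of representations; then $g$ must satisfy the same recursion equation, so an easy induction on terms shows $g = f$.

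The only real content is the inductive construction of $\hat{S}$ and the associated recursion principle; once these are in hand, both existence and uniqueness of the mediating morphism are immediate. I do not expect any substantive obstacle, as this is a standard instance of the free algebra construction for a finitary single-sorted algebraic theory; the signature $S$ has no binding and no equations, so no quotienting or well-foundedness subtlety arises beyond the usual induction on term depth.
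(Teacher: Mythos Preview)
Your proposal is correct and is exactly the standard free term algebra construction for a single--sorted algebraic signature. The paper does not actually give a proof of this lemma; it is stated without argument, presumably because it is a well--known elementary fact about algebraic signatures (and the paper refers back to \cite{ahrens_ext_init} for the treatment of such signatures). Your sketch fills in precisely what any reader would be expected to supply.
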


\subsubsection{Signatures for Terms}\label{sec:term_sigs}
For the rest of the section, let $S$ be a signature for types. 
Signatures for \emph{terms over $S$} are \emph{syntactically} defined as in \cite{ahrens_ext_init}.
We call \emph{degree} of an arity the number of object type variables appearing in the arity.
For instance, the signature $\Sigma_{\TLC}$ of simply--typed lambda terms over the signature $S_{\TLC}$ (cf.\ \autoref{ex:type_sig_SLC}) is given by
two arities of degree 2:
\begin{equation} \Sigma_{\TLC} := \{ \abs :  \bigl[([1],2)\bigr] \to (1\TLCar 2) \enspace , 
       \quad \app : \bigl[([],1\TLCar 2),([],1)\bigr]\to 2\}\enspace . 
  \label{eq:sig_tlc_higher_order}
\end{equation}
Intuitively, the numbers vary over object types. More precisely, for any representation of $S_{\TLC}$ in a set $T$,
the numbers vary over elements of $T$.

In order to define \emph{representations} of such a signature $(S,\Sigma)$, we need to 
consider set families where the indexing set is equipped with a representation of the type
signature $S$:

\begin{definition}
  Given a category $\C$ --- e.g., the category $\Set$ of sets --- 
  we define the category $S\C_n$ to be the category an object of which 
  is a triple $(T,V,\vectorletter{t})$ where $T$ is a representation of $S$,
  the object $V \in \family{\C}{T}$ is a $T$--indexed family of objects of $\C$ and $\vectorletter{t}$ is 
  a vector of elements of $T$ of length $n$.
We denote by $SU_n:S\C_n \to \Set$ the functor mapping an object $(T,V,\vectorletter{t})$ 
to the underlying set $T$.

  We have a forgetful functor $S\C_n \to \T\C_n$ which forgets the representation
 structure.
  On the other hand, any representation $T$ of $S$ in a set $T$ gives rise to a functor 
  $\family{\C}{T}_n \to S\C_n$, which ``attaches'' the representation structure.
\end{definition}

Recall from \cite{ahrens_ext_init} that $S(n)$ denotes terms of $S$ with free variables in $\{1,\ldots, n\}$.
The meaning of a term $s\in S(n)$ as a natural transformation $s: 1 \Rightarrow SU_n : S\C_n \to \Set$
is given by recursion on the structure of $s$:

\begin{definition}[Canonical Natural Transformation]\label{def:nat_trans_type_indicator}\label{def:canonical_nat_trans}
  Let $s\in S(n)$ be a type of degree $n$. 
  Then $s$ denotes a natural transformation 
       $s:1\Rightarrow SU_n : S\C_n \to \Set$
  defined recursively on the structure of $s$ as follows: for $s = \alpha (a_1,\ldots,a_k)$
  the image of a constructor $\alpha \in S$ we set
  \[s(T,V,\vectorletter{t}) = \alpha (a_1(T,V,\vectorletter{t}),\ldots,a_k(T,V,\vectorletter{t})) \]
  and for $s = m$ with $1\leq m\leq n$  we define
  $s(T,V,\vectorletter{t}) = \vectorletter{t}(m)$.
  We call a natural transformation of the form $s\in S(n)$ \emph{canonical}.
\end{definition}
The natural transformations of \autoref{ex:numbers_transformation} yield examples of  
canonical transformations.
We now define representations of the 1--signature $(S_{\TLC},\Sigma_{\TLC})$ of the simply--typed lambda calculus. 
Afterwards we define general 1--signatures and their representations.

\begin{example}[\autoref{ex:tlc_mod_higher_degree} cont.]\label{ex:tlc_mod_mor_higher_degree}
 Let $S := S_{\TLC}$ be the signature for types of $\TLC$ as in \autoref{ex:type_sig_SLC}.
 We denote by $i : 1 \Rightarrow SU_2 : S\C_2 \to \Set$, for $i = 1,2$, the natural transformations
 defined analogously to those of \autoref{ex:tlc_mod_higher_degree}.
 We define the transformation $1 \TLCar 2 : 1 \Rightarrow SU_2$ as
   \[ (1 \TLCar 2) (V,s,t)(*) := s \TLCar t \enspace . \]
 The constructors of the simply--typed lambda calculus thus constitute the carriers of 
two module morphisms,
\begin{align}
 \Abs &: \fibre{{\TLCDELTA_2}^1}{2} \to \fibre{\TLCDELTA_2}{1 \TLCar 2} \notag \\
 \App &: \fibre{{\TLCDELTA_2}}{1 \TLCar 2} \times \fibre{\TLCDELTA_2}{1} \to \fibre{\TLCDELTA_2}{2} \enspace . \label{eq:tlcd_constr}
\end{align}

   \label{ex:rep_1--sig_tlc}

\noindent
 Altogether we model the simply--typed lambda calculus with equality relation
 via the following categorical structure:
 \begin{itemize}
  \item the relative monad $\TLCDELTA$ on $\TDelta{\TLCTYPE}$ and
  \item two morphisms of $\TLCDELTA_2$--modules $\Abs$ and $\App$ of type as in \autoref{eq:tlcd_constr}.
 \end{itemize}
 We thus define a \emph{representation} of the simply--typed lambda calculus, specified by 
the signature $(S_{\TLC},\Sigma_{\TLC})$ (cf.\ \autoref{eq:sig_tlc_higher_order}), as
%  \begin{itemize}
%   \item 
     a representation $T$ of $S_{\TLC}$ in a set $T$, a monad $P$ on $\TDelta{T}$ and 
%   \item 
     two morphisms of $P_2$--modules 
  \begin{align*}\Abs : \fibre{{P_2}^1}{2} \to \fibre{P_2}{1\TLCar 2}  \quad\text{ and } \quad
            \App :\fibre{{P_2}}{1 \TLCar 2} \times \fibre{P_2}{1} \to \fibre{P_2}{2} \enspace . 
  \end{align*}
              
%  \end{itemize} 
\noindent
Together with a suitable definition of \emph{morphisms of representations}, this yields a category
 in which the triple $(\TLCDELTA, \Abs,\App)$ is the initial object.
% \end{remark}
\end{example}

In general, an arity over $S$ of degree $n\in \NN$ is given by a pair of functors, each of which
associates, to any suitable monad $P$, a source $\dom(s,P)$ and a target $\dom(s,P)$ 
of a $P_n$--module morphism.
Each such functor is called a \emph{half--arity}.
Representing an arity in the monad $P$ then means specifying a module morphism
$\dom(s,P)\to\cod(s,P)$.

We define the source and target categories of half--arities;
an object of the source category is a pair of a representation of $S$ 
in a set $T$ and a monad on $\TDelta{T}$.

\begin{definition}[Relative $S$--Monad] \label{def:s-rmon}
  The \emph{category $\SigRMon{S}$ of relative $S$--monads} is the category whose objects are pairs $(T,P)$ of
  a representation $T$ of $S$ and a relative monad $P$ on $\TDelta{T}$.
  A morphism from $(T,P)$ to $(T', P')$ is a pair $(g, f)$ of a morphism of $S$--re\-pre\-sen\-ta\-tions $g : T\to T'$ and a 
    morphism of relative monads $f : P\to P'$ over $\retyping{g}$ as in \autoref{rem:rel_mon_mor_case}.

   Given $n\in \mathbb{N}$, we write $\SigRMon{S}_n$ for the category whose objects are pairs $(T,P)$ of a representation $T$ of $S$ and 
  a relative monad $P$ over $\TDelta{T}_n$. A morphism from $(T,P)$ to $(T', P')$ is a pair $(g, f)$ of a morphism of 
      $S$--representations $g : T\to T'$ and a 
   monad morphism $f : P\to P'$ over the retyping functor $\retyping{g}(n)$ (\autoref{def:retyping_functor_pointed}).
\end{definition}

The target categories mix modules over different relative monads:

\begin{definition}
  \label{def:lrmod_typed}
  Given $n\in \mathbb{N}$, an algebraic signature $S$ and a category $\D$, 
  we call $\LRMod{n}{S}{\D}$ 
  the category an object of which is a pair $(P,M)$ of a relative $S$--monad $P \in \SigRMon{S}_n$ and a $P$--module with codomain $\D$.
  A morphism to another such $(Q,N)$ is a pair $(f, h)$ of a morphism of relative $S$--monads 
$f : P \to Q$ in $\SigRMon{S}_n$ and a morphism of relative modules $h : M \to f^*N$.
\end{definition}

We sometimes just write the module --- i.e.\ the second --- component of an object or morphism
 of the large category of modules.
Given $M\in \LRMod{n}{S}{\D}$, we thus write $M(V,\vectorletter{t})$ or $M_{V,\vectorletter{t}}$ 
for the value of the module on the object $(V,\vectorletter{t})$.

A \emph{half--arity over $S$ of degree $n$} is a functor from relative $S$--monads to the category of large modules of degree $n$.

\begin{definition}%[Half--Arity over $S$ (of degree $n$)]
  \label{def:half_arity_degree_semantic_typed}
 Given an algebraic signature $S$ and $n\in \mathbb{N}$, a \emph{half--arity over $S$ of degree $n$} is a functor
  $\alpha : \SigRMon{S} \to \LRMod{n}{S}{\PO}$
 which is pre--inverse to the forgetful functor.
\end{definition}

\noindent
The basic building brick for the half--arities we consider is the \emph{tautological} module:

\begin{definition}%[Tautological Module of Degree $n$]
\label{def:taut_mod_pointed}
To any relative $S$--monad $R$ we associate
  the \emph{tautological module} of $R_n$ (cf.\ \autoref{def:cat_indexed_pointed}), 
  \[\Theta_n(R):= (R_n,R_n) \in \LRMod{n}{S}{\TP{T}_n} \enspace . \]
 %This construction extends to a functor.
\end{definition}

\noindent
From the tautological module, we build \emph{classic} half--arities using 
canonical natural transformations (cf.\ \autoref{def:canonical_nat_trans}); 
these transformations specify context extension (derivation) and
selection of specific object types (fibre):

\begin{definition}[Classic Half--Arity]
  The following clauses define an inductive set of 
  \emph{classic} half--arities, to which we restrict our attention:
  \begin{itemize}
   \item The constant functor $* : R \mapsto 1_{\RMod{R}{\PO}}$ is a classic half--arity.
   \item Given any canonical natural transformation $\tau : 1 \Rightarrow S U_n$ (cf.\ \autoref{def:canonical_nat_trans}), 
         the point-wise fibre module with respect to $\tau$ (cf.\ \autoref{def:fibre_rel_mod_II}) of the tautological module 
         $\Theta_n : R\mapsto (R_n, R_n)$ (cf.\ \autoref{def:taut_mod_pointed}) is a classic half--arity of degree $n$, 
        \[ \fibre{\Theta_n}{\tau} : \SigRMon{S} \to \LRMod{n}{S}{\PO} \enspace , \quad R\mapsto \fibre{R_n}{\tau} \enspace . \]
   \item Given any (classic) half--arity $M : \SigMon{S} \to \LMod{n}{S}{\PO}$ 
         of degree $n$ and a canonical natural transformation $\tau : 1 \Rightarrow S U_n$, 
         the point-wise derivation of $M$ with respect to $\tau$ is a (classic) half--arity of degree $n$, 
      \[ M^{\tau} : \SigRMon{S} \to \LRMod{n}{S}{\PO} \enspace , \quad R\mapsto \bigl(M(R)\bigr)^{\tau} \enspace . \]
   \item 
      Given two (classic) half--arities $M$ and $N$ of degree $n$, 
       their pointwise product of modules $M\times N$ is again a (classic) half--arity of degree $n$. 
  \end{itemize}

\end{definition}

\noindent
A half--arity of degree $n$ thus associates, to any relative $S$--monad $P$ over a set of types $T$, 
a \emph{family of $P$--modules} indexed by $T^n$, cf.\ \autoref{rem:family_of_mods_cong_pointed_mod_relative}.

An arity of degree $n\in \mathbb{N}$ for terms over an algebraic signature $S$ is defined to be a pair of functors
from relative $S$--monads to modules in $\LRMod{n}{S}{\PO}$.
The degree $n$ corresponds to the number of object type indices of its associated constructor.
For instance, the arities of $\Abs$ and $\App$ of \autoref{eq:sig_tlc_higher_order} are of degree $2$.

\begin{definition}[Term--Arity, Signature over $S$]
  A \emph{classic arity $\alpha$ over $S$ of degree $n$} is a pair 
 $ s = \bigl(\dom(\alpha), \cod(\alpha)\bigr) $
  of half--arities over $S$ of degree $n$ such that 
$\dom(\alpha)$ is classic and
$\cod(\alpha)$ is of the form $\fibre{\Theta_n}{\tau}$ for some canonical 
        transformation $\tau$ as in \autoref{def:canonical_nat_trans}.
We write $\dom(\alpha) \to \cod(\alpha)$ for the arity $\alpha$, and $\dom(\alpha, R) := \dom(\alpha)(R)$
and similar for the codomain and morphisms of relative $S$--monads. 
A \emph{term--signature} $\Sigma$ over $S$ 
is a family of classic arities (of varying degree) over $S$. 
\end{definition}

\begin{definition}[1--Signature]\label{def:1--sig}
  A \emph{1--signature} is a pair $(S,\Sigma)$ consisting of an algebraic signature $S$ for sorts and 
  a term--signature $\Sigma$ over $S$.
 \end{definition}

\begin{example}[\autoref{ex:tlc_mod_mor_higher_degree} cont.]\label{ex:tlc_sig_higher_order}
The terms of the simply typed lambda calculus over the type signature of \autoref{ex:type_sig_SLC} are given by the arities
 \begin{equation*}     
   \abs : \fibre{\Theta}{2}^1 \to \fibre{\Theta}{1\TLCar 2} \enspace , \quad
   \app : \fibre{\Theta}{1\TLCar 2} \times \fibre{\Theta}{1} \to \fibre{\Theta}{2} \quad ,
 \end{equation*}
  both of which are of degree $2$ --- we leave the degree implicit. 
  The outer lower index and the exponent are to be interpreted as de Bruijn variables, ranging over types. 
  They indicate the fibre (cf.\ \autoref{def:fibre_rel_mod_II}) and derivation (cf.\ \autoref{def:derived_rel_mod_II}),
  respectively, in the special case where the corresponding natural transformation is given by a natural number
  as in \autoref{def:nat_trans_type_indicator}.
\end{example}

\begin{example}[\autoref{ex:type_PCF} cont.] \label{ex:term_sig_pcf}
  The term--signature of $\PCF$ consists of an arity for abstraction and an arity for application,
each of degree 2, an arity (of degree 1) for the fixed point operator, and 
one arity of degree 0 for each logic and arithmetic constant --- some of which we omit:
\begin{align*}
     \abs &: \fibre{\Theta}{2}^1 \to \fibre{\Theta}{1\PCFar 2} \enspace , \quad
       \app : \fibre{\Theta}{1\PCFar 2} \times \fibre{\Theta}{1} \to \fibre{\Theta}{2} \enspace ,\quad 
       \PCFFix : \fibre{\Theta}{1\PCFar 1} \to \fibre{\Theta}{1} \enspace ,  \\
      &\PCFn{n} : * \to \fibre{\Theta}{\Nat }  \quad \text{for $n\in \NN$} \enspace , \quad
        \PCFSucc : * \to \fibre{\Theta}{\Nat \PCFar \Nat} \enspace , \quad
        \PCFZerotest : * \to \fibre{\Theta}{\Nat \PCFar \Bool } 
\end{align*}
\end{example}

\begin{definition}[Representation of an Arity, of a 1--Signature over $S$]
   \label{def:1--rep_typed}
   A representation $r$ of an arity $\alpha$ over $S$ in an $S$--monad $R$ is a morphism of relative modules 
     $r : \dom(\alpha,R) \to \cod(\alpha, R)$.
  A representation $R$ of a signature over $S$ is a given by a relative $S$--monad --- called $R$ as well ---  
  and a representation $\alpha^R$ of each arity $\alpha$ of $S$ in $R$.
\end{definition}

Representations of $(S,\Sigma)$ are the objects of a category $\Rep^\Delta(S,\Sigma)$, whose morphisms are defined as follows:
\begin{definition}[Morphism of Representations]
  \label{def:rel_mor_of_reps_typed}
  Given representations $P$ and $R$ of a typed signature $(S,\Sigma)$, a morphism of representations 
  $f : P\to R$ is given by a morphism of relative $S$--monads $f : P \to R$, such that for any arity $\alpha$ of $\Sigma$
  the following diagram of module morphisms commutes:
  \[
  \comp{\alpha^P}{\cod(\alpha,f)} = \comp{\dom(\alpha,f)}{\alpha^R} \enspace . 
% 
%   \begin{xy}
%    \xymatrix{
%         **[l]\dom(\alpha,P) \ar[d]_{\dom(\alpha, f)} \ar[r]^{\alpha^P} & **[r]\cod(\alpha,P) \ar[d]^{\cod(\alpha,f)} \\
%         **[l]\dom(\alpha,R) \ar[r]_{\alpha^R} & **[r]\cod(\alpha,R) .
%     }
%   \end{xy}
 \]
\end{definition}

\begin{lemma}\label{lem:init_no_eqs_typed} 
  For any 1--signature $(S,\Sigma)$, the category of representations of $(S,\Sigma)$ has an initial object.
\end{lemma}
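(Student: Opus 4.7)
The plan is to transport the initial representation already provided by the main theorem of \cite{ahrens_ext_init} (where monads live on $\family{\Set}{T}$) into the new category $\Rep^{\Delta}(S,\Sigma)$ (where monads live on $\TDelta{T}$), using an adjunction. The excerpt itself telegraphs this strategy when it says that the two categories ``are connected through an adjunction which transports the initial object of the former to the latter''. Since left adjoints preserve colimits, in particular initial objects, it suffices to exhibit $\Rep(S,\Sigma)$ as a reflective-like predecessor of $\Rep^{\Delta}(S,\Sigma)$ via a functor lifted from the base adjunction $\Delta \dashv U$ of Definition \ref{def:delta}.

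First I would invoke \cite{ahrens_ext_init} to obtain an initial representation $\hat{R} \in \Rep(S,\Sigma)$. Next I would construct a functor $\Delta_\sharp : \Rep(S,\Sigma) \to \Rep^{\Delta}(S,\Sigma)$ that sends a set-valued representation $P$ over types $T$ to the relative monad $V \mapsto (PV, \delta_{PV})$ on $\TDelta{T}$, with the same unit and the same substitution map. The enrichment condition of Definition \ref{def:strenghtened_delta} holds trivially: any map out of the discrete preorder $\Delta X$ is automatically monotone, so $\sigma_{X,Y}$ is monotone in both arguments. The constructors of $\Sigma$ are transported by the same underlying set-valued carrier maps; since the classic half-arity constructors (products, fibres along canonical transformations, context extensions along $*u$) are built pointwise and commute with $\Delta$ up to the canonical isomorphism $\Delta(X)^{*u} \cong \Delta(X^{*u})$, each representation $\alpha^P$ transports to a monotone module morphism $\alpha^{\Delta_\sharp P}$.

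Symmetrically I would construct $U_\sharp : \Rep^{\Delta}(S,\Sigma) \to \Rep(S,\Sigma)$ by postcomposing with the forgetful functor $U : \PO \to \Set$; by the same coherence argument this is well-defined on representations and on morphisms. The adjunction $\Delta_\sharp \dashv U_\sharp$ would then be established pointwise: a morphism of representations $\Delta_\sharp P \to R$ over some type translation $g : T \to T'$ is a family of monotone maps whose domains are discrete, hence is in bijection with the family of underlying set maps, i.e.\ with a morphism $P \to U_\sharp R$ in $\Rep(S,\Sigma)$. One checks that compatibility with the arities of $\Sigma$ --- monotonicity being automatic on the source side --- is preserved by this bijection. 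Applying the left adjoint $\Delta_\sharp$ to $\hat{R}$ then yields the desired initial object $\Delta_\sharp \hat{R} \in \Rep^{\Delta}(S,\Sigma)$, namely the freely generated types and terms equipped with syntactic equality.

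The main obstacle will be the bookkeeping for the pointed/higher-degree constructions of Definitions \ref{def:derived_rel_mod_II} and \ref{def:fibre_rel_mod_II}: one has to verify that $\Delta_\sharp$ and $U_\sharp$ commute, up to coherent natural isomorphism, with the lifting $R \mapsto R_n$, with derivation $M \mapsto M^\tau$ along a canonical transformation $\tau$, and with the fibre $M \mapsto \fibre{M}{\tau}$, and that these coherences are compatible with one another so that classic half-arities are preserved on the nose. Once that is verified, the conclusion is immediate from preservation of initial objects by a left adjoint; no inductive construction of terms over $(S,\Sigma)$ needs to be redone in the preordered setting.
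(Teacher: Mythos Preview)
Your proposal is correct and follows essentially the same approach as the paper: the paper's proof also constructs an adjunction $\Delta_* \dashv U_*$ between $\Rep(S,\Sigma)$ and $\Rep^{\Delta}(S,\Sigma)$ (lifted from the base adjunction $\Delta \dashv U$) and then invokes cocontinuity of left adjoints to transport the initial object from \cite{ahrens_ext_init}. The bookkeeping you flag for the pointed and higher-degree constructions is indeed the only substantive work, and the paper defers it to the analogous untyped argument in \cite{ahrens_relmonads}.
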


\begin{proof} 
  The initial object is obtained, analogously to the untyped case (cf.\ \cite{ahrens_relmonads}), 
 via an adjunction $\Delta_* \dashv U_*$ between the categories of representations of $(S,\Sigma)$
 in relative monads and those in monads as in \cite{ahrens_ext_init}. We use that left adjoints are
 cocontinuous, and thus preserve initial objects.
\end{proof}

\subsubsection{Inequations \& 2--Signatures}

An inequation associates, 
to any representation of $(S,\Sigma)$ in a relative monad $P$, two parallel morphisms of $P$--modules.
Similarly to arities, an inequation (of higher degree) may be given by a \emph{family of inequations}, indexed by object types.
Consider the simply--typed lambda calculus, which was defined with \emph{typed} abstraction and application.
Similarly, we have a \emph{typed substitution} operation for $\TLC$ and,
more generally, for any monad on $\TDelta{T}$ (cf.\ \autoref{def:hat_P_subst_typed}). 
For $s,t\in \TLCTYPE$ and $M\in\SLC(V^{*s})_t$ and $N \in \SLC(V)_s$, beta reduction is specified by

  \[ \lambda M(N) \leadsto M [* := N] \enspace , \]
where our notation hides the fact that abstraction, application and substitution are typed operations.
More formally, such a reduction rule might read as a family of inequations between morphisms of modules
\[  \comp{(\abs_{s,t} \times\id)}{\app_{s,t}} \enspace \leq \enspace \_ [*^s :=_t \_ ] \enspace , \] 
where $s,t\in \TLCTYPE$ range over types of the simply--typed lambda calculus.
Analogously to 1--signatures, we want to specify the beta rule without referring to the set $\TLCTYPE$,
but instead express it for an arbitrary representation $R$ of the typed signature $(S_{\TLC},\Sigma_{\TLC})$ 
(cf.\ \autoref{ex:tlc_sig_higher_order}),
as in

\[  \comp{(\abs^R \times \id)}{\app^R} \enspace \leq \enspace \_ [* := \_ ] \enspace , \] 
where both the left and the right side of the inequation are given by suitable $R$--module morphisms of degree 2.

\begin{definition}
 Let $(S,\Sigma)$ be a 1--signature, and let $U : \Rep^{\Delta}(S,\Sigma)\to \SigRMon{S}$ be the forgetful functor.
 Given two (classic) half--arities $\dom(s)$ and $\cod(s)$ of degree $n\in \mathbb{N}$, 
a \emph{half--equation} $\alpha : \dom(s) \to \cod(s)$ of degree $n$ over $(S,\Sigma)$
is a natural transformation $\alpha : \comp{U}{\dom(s)} \to \comp{U}{\cod(s)}$.
We call an inequation \emph{classic} when its codomain is given by a classic half--arity.
\end{definition}

\begin{definition}[Substitution of \emph{one} Variable as a Half--Equation]
\label{def:forget_hat_module}
   \label{def:hat_P_subst_typed}
  Let $T$ be a (nonempty) set and let $P$ be a monad over $\family{\Delta}{T}$.
  For any $s,t\in T$ and $X\in \TS{T}$ we define a binary substitution operation
    $(y,z)\mapsto y [*:= z] := \kl{[\we_X , x \mapsto z] }(y)$.
  For any pair $(s,t)\in T^2$, we thus obtain a morphism of $P$--modules
  \[ \subst^{P}_{s,t} : \fibre{{P}^{s}}{t} \times \fibre{{P}}{s} \to \fibre{{P}}{t} \enspace . \]
\label{def:subst_half_eq_typed}
  
\noindent
By \autoref{rem:family_of_mods_cong_pointed_mod_relative} this family is equivalent to a module morphism of degree 2.
 We thus have a half--equation of degree $2$ with classic domain and codomain over any typed signature,
 \[\subst  : R\mapsto \subst^R :  \fibre{{R}_2^1}{2} \times \fibre{{R}_2}{1} \to \fibre{{R}_2}{2} \enspace . \]
\end{definition} 

\begin{example}[\autoref{ex:tlc_sig_higher_order} cont.]\label{ex:app_circ_half_typed}
  The following map yields a half--equation over the signature $\SLC$, as well as over the signature of $\PCF$:
 \[ \comp{(\abs\times\id)}{\app} : R \mapsto \comp{(\abs^R \times \id^R)}{\app^R}  : 
\fibre{{R}_2^1}{2} \times \fibre{{R}_2}{1} \to \fibre{{R}_2}{2} \enspace . \]
\end{example}

\begin{definition}[Inequation] \label{def:ineq_typed}
 Given a signature $(S,\Sigma)$, an \emph{inequation over $(S,\Sigma)$}, or \emph{$(S,\Sigma)$--inequation}, 
of degree $n\in \NN$ is a pair of 
parallel half--equations of degree $n$.
We write $\alpha \leq \gamma$ for the inequation $(\alpha, \gamma)$.
\end{definition}

\begin{example}[Beta Reduction]
  For any suitable 1--signature --- i.e.\ for any 1--signature that has an arity for abstraction and an arity for application ---
  we specify beta reduction using the parallel half--equations of \autoref{def:subst_half_eq_typed} and \autoref{ex:app_circ_half_typed}:
  \[   \comp{(\abs \times \id)}{\app} \leq \subst : \fibre{\Theta_2}{2}^1 \times \fibre{\Theta_2}{1} \to \fibre{\Theta_2}{2} \enspace . \]
\end{example}

\begin{example}[Fixpoints and Arithmetics of $\PCF$]\label{ex:pcf_ineqs}
 We specify some of the reduction rules of $\PCF$ via inequations over the 1--signature of $\PCF$ (cf.\ \autoref{ex:term_sig_pcf});
for space reasons we refrain from specifying all of them. 
 The reader may fill in the missing inequations, whose informal specification can be found, e.g., in \cite{Hyland00onfull}.
   \begin{align*}
      \PCFFix \enspace &\leq \enspace \comp{(\id,\PCFFix)}{\app} \enspace : \enspace \fibre{\Theta}{1\PCFar 1} \to \fibre{\Theta}{1} \\
%        \comp{(\PCFSucc,\PCFn{n})}{\app} \leq \PCFn{n+1} \enspace &: \enspace * \to \fibre{\Theta}{\Nat}\\
      \comp{(\PCFPred,\PCFn{0})}{\app} \enspace &\leq \enspace \PCFn{0} \enspace : \enspace * \to \fibre{\Theta}{\Nat} \\
       \quad \comp{\left(\PCFPred,\comp{({\PCFSucc},{\PCFn{n}})}{\app}\right)}{\app} \enspace  &\leq \enspace  \PCFn{n} \enspace : \enspace * \to \fibre{\Theta}{\Nat}\\
%       \comp{\left(\PCFPred,\comp{({\PCFSucc},{\PCFn{n}})}{\app}\right)}{\app} &\leq \PCFn{n} : * \to \fibre{\Theta}{\Nat}\\
      \comp{(\PCFZerotest,\PCFn{0})}{\app} \enspace &\leq \enspace  \PCFTrue \enspace : \enspace * \to \fibre{\Theta}{\Bool} \\
          \quad \comp{\left(\PCFZerotest,\comp{(\PCFSucc,\PCFn{n})}{\app}\right)}{\app} \enspace  &\leq \enspace \PCFFalse \enspace : \enspace * \to \fibre{\Theta}{\Bool}
%        \comp{\left(\PCFZerotest,\comp{(\PCFSucc,\PCFn{n})}{\app}\right)}{\app} &\leq \PCFFalse  : * \to \fibre{\Theta}{\Bool}
%                              &\vdots
   \end{align*}

\end{example}

\begin{definition}[Representation of Inequations]\label{def:rep_ineq_typed}
  \label{def:2--rep_typed}
 A \emph{representation of an $(S,\Sigma)$--inequa\-tion $\alpha\leq \gamma : U \to V$} (of degree $n$) is any representation 
  $R$ over a set of types $T$ of $(S,\Sigma)$ such that 
  $\alpha^R \leq \gamma^R$ pointwise, i.e.\ if for any pointed context $(X,\vectorletter{t}) \in \TS{T}\times T^n$, 
      any $t\in T$ and any $y\in U^R_{(X, \vectorletter{t})}(t)$, 
 $\alpha^R(y) \enspace \leq \enspace \gamma^R(y)$, 
where we omit the sort argument $t$ as well as the context $(X,\vectorletter{t})$ from $\alpha$ and $\gamma$.
We say that such a representation $R$ \emph{satisfies} the inequation $\alpha \leq \gamma$.

The category of representations of $((S,\Sigma), A)$ is defined as the full subcategory of $\Rep^{\Delta}(S,\Sigma)$ of 
representations  satisfying each inequation of $A$.
According to \autoref{rem:family_of_mods_cong_pointed_mod_relative}, 
 the above inequation is equivalent to ask whether, for any $\vectorletter{t} \in T^n$, 
    any $t\in T$ and any $y\in U_{\vectorletter{t}}^R(X)(t)$, 
$\alpha_{\vectorletter{t}}^R(y) \enspace \leq \enspace \gamma_{\vectorletter{t}}^R(y)$.
%      \label{eq:comp_sem_rep_ineq_indexed}
\end{definition}

\begin{definition}[2--Signature]\label{def:2--sig}
A \emph{2--signature} is a pair given by a 1--signature $(S,\Sigma)$ and a set $A$ of \emph{classic}
 inequations over $(S,\Sigma)$.
\end{definition}

\begin{example}[Representations of $\TLC$ with $\beta$]
 \label{ex:rep_2--sig_tlc}
 A representation of $(S_{\TLC},\Sigma_{\TLC},\beta)$ is given by a representation
  $(P, \Abs, \App)$ of $(S_{\TLC},\Sigma_{\TLC})$  
  over a set $T$ ``of types'' such that,
  for any context $V\in \TS{T}$, any $s,t\in T$ and any $M\in P^s(V)(t)$ and $N\in P(V)(s)$,
   \[ \App_{s,t}(\Abs_{s,t}(M),N) \enspace \leq \enspace M[*:= N] \enspace . \]
 The initial such representation is given by the triple $(\TLCB, \Abs, \App)$, where
 \begin{align*}
    \Abs &: \fibre{{\TLCB_2}^1}{2} \to \fibre{\TLCB_2}{1 \TLCar 2} \\
 \App &: \fibre{{\TLCB_2}}{1 \TLCar 2} \times \fibre{\TLCB_2}{1} \to \fibre{\TLCB_2}{2} \enspace .
 \end{align*}

\end{example}

\noindent
The above example is an instance of the following general theorem for 2--signatures:

\begin{theorem}\label{thm:init_w_ineq_typed}
 For any set of classic $(S,\Sigma)$--inequations $A$, the category of representations of $((S,\Sigma),A)$ has an initial object.
\end{theorem}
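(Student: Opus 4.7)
The plan is to construct the initial representation by enlarging the preorder on the terms of the initial $(S,\Sigma)$-representation $\hat{\Sigma}$ obtained in \autoref{lem:init_no_eqs_typed}, following the strategy used in the untyped setting of \cite{ahrens_relmonads}. Since $\Rep^\Delta((S,\Sigma),A)$ is by definition a full subcategory of $\Rep^\Delta(S,\Sigma)$, the underlying families of terms of a potential initial object need not change; only the fibrewise preorders have to be enlarged just enough to validate every inequation of $A$, while remaining compatible with substitution and with the constructors of $\Sigma$.

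Concretely, I would start from $\hat{\Sigma}$, which by construction carries the diagonal preorder on each fibre. For every pointed context $(X,\vectorletter{t})$ and every type $t$, define $\leq_A$ as the smallest preorder on $\hat{\Sigma}(X,\vectorletter{t})(t)$ satisfying: (i) for every inequation $\alpha\leq\gamma$ in $A$ and every argument $y$ in the common domain of $\alpha^{\hat{\Sigma}}$ and $\gamma^{\hat{\Sigma}}$, the relation $\alpha^{\hat{\Sigma}}(y)\leq_A\gamma^{\hat{\Sigma}}(y)$ holds; (ii) every constructor of $\Sigma$, viewed as a module morphism, is monotone with respect to $\leq_A$ in each argument; and (iii) the substitution map of $\hat{\Sigma}$ is monotone in both arguments, in the strong sense required by \autoref{def:strenghtened_delta}. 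Write $\bar{\Sigma}$ for the resulting structure: the same underlying $S$-indexed family of sets and the same operations as $\hat{\Sigma}$, but equipped with the preorders $\leq_A$. Clauses (ii) and (iii) ensure that $\bar{\Sigma}$ is a relative $S$-monad in $\Rep^\Delta(S,\Sigma)$; clause (i) ensures that it satisfies every inequation of $A$, whence $\bar{\Sigma}\in\Rep^\Delta((S,\Sigma),A)$.

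For initiality, let $R$ be an arbitrary representation of $((S,\Sigma),A)$. Initiality of $\hat{\Sigma}$ in the larger category $\Rep^\Delta(S,\Sigma)$ yields a unique morphism $h:\hat{\Sigma}\to R$ of $(S,\Sigma)$-representations. Since $\bar{\Sigma}$ shares its underlying data with $\hat{\Sigma}$, the underlying map of $h$ is the only candidate for a morphism $\bar{\Sigma}\to R$ in $\Rep^\Delta((S,\Sigma),A)$, and it only remains to check that it is monotone for $\leq_A$. I would prove this by induction on the generation of $\leq_A$: the inequation generators (i) are transported to valid inequalities in $R$ because $R$ satisfies $A$ and $h$ commutes with the half-equations $\alpha$ and $\gamma$ (naturality of half-equations in the forgetful functor); the closure conditions (ii) and (iii) transport across $h$ because $h$ commutes with the constructors and with substitution, and because constructors and substitution are monotone in the target representation $R$. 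Uniqueness of the resulting morphism in $\Rep^\Delta((S,\Sigma),A)$ is then inherited from its uniqueness in $\Rep^\Delta(S,\Sigma)$.

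The hard part will be making the simultaneous inductive definition of $\leq_A$ precise and verifying that the assembled family of preorders really carries the structure of a relative $S$-monad with monotone higher-order substitution in the sense of \autoref{def:strenghtened_delta}: one must simultaneously close under inequation instances, reflexivity, transitivity, congruence for each constructor of $\Sigma$, and substitution in both arguments, and then show that none of these closures disturbs any of the others. The untyped analogue in \cite{ahrens_relmonads} handles this by an explicit step-wise generation together with a careful proof that the resulting relation is a preorder compatible with substitution; the additional bookkeeping here consists in threading through the $S$-representation structure on the sort set and the pointing vectors $\vectorletter{t}$ of \autoref{def:cat_indexed_pointed}, which should be a routine but notationally heavy adaptation of the untyped argument.
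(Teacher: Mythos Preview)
Your proposal is correct and follows essentially the same route as the paper: the paper does not spell out a proof here but defers to the author's thesis, and the surrounding text (in particular the proof of \autoref{lem:init_no_eqs_typed}) makes clear that the intended argument is precisely the typed analogue of the construction in \cite{ahrens_relmonads} that you describe --- start from the initial $(S,\Sigma)$--representation with diagonal preorder, close the fibrewise relation under the inequation instances of $A$, congruence for the constructors of $\Sigma$, and monotone substitution, and then verify that the unique $(S,\Sigma)$--morphism into an arbitrary $R$ satisfying $A$ is monotone for this generated preorder. One small point worth making explicit in your uniqueness step: a morphism $\bar{\Sigma}\to R$ is not literally a morphism out of $\hat{\Sigma}$, but since $\leq_A$ refines the diagonal, any such morphism is in particular monotone for the diagonal and hence factors through $\hat{\Sigma}$, whence uniqueness indeed follows from initiality of $\hat{\Sigma}$.
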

A proof of the theorem can be found in the author's PhD thesis \cite{ahrens_phd}.

The following remark gives a ``manual''
on how to use the universal property of initiality in order to specify a translation 
between two languages:

\begin{remark}[Iteration Principle by Initiality]\label{rem:comp_sem_iteration}
 The universal property of the language generated by a 2--signature yields an \emph{iteration principle}
  to define maps --- translations --- on this language, which are compatible by construction with substitution and 
  reduction in the source and target languages. 
A translation from the language generated by $(S,\Sigma,A)$ to the language 
generated by $(S',\Sigma',A')$ can be obtained, via the universal property, 
as an initial morphism in $\Rep^{\Delta}(S,\Sigma,A)$, %the category of representations of the signature $(S,\Sigma,A)$ 
obtained by equipping the relative monad $\init{\Sigma}'_{A'}$ underlying the target language with a representation
of the signature $(S,\Sigma,A)$. In more detail:
\begin{enumerate}
 \item we give a representation of the type signature $S$ in the set $\init{S}'$. By initiality of $\init{S}$, this yields a 
       translation $\init{S} \to \init{S}'$ of sorts.
 \item Afterwards, we specify a representation of the term signature $\Sigma$ in the monad $\init{\Sigma}'_{A'}$ by 
        defining suitable (families) of morphisms of $\init{\Sigma}'_{A'}$--modules. This yields a representation $R$
           of $(S, \Sigma)$ in the monad $\init{\Sigma}'_{A'}$.
 \item Finally, we verify that the representation $R$ of $(S,\Sigma)$ satisfies the inequations of $A$, that is, we check
       whether, for each $\alpha \leq \gamma : \dom(\alpha) \to \cod(\alpha) \in A$, and for each context $V$, each $t\in \init{S}$ and $x \in \dom(\alpha)^R_V(t)$,
        $ \alpha^R (x) \enspace \leq \enspace \gamma^R (x)$.
\end{enumerate}
\end{remark}

\begin{example}[Translation from $\PCF$ to $\LC$]
We use the aforementioned iteration principle to specify a translation from $\PCF$ to $\ULC$, which is semantically faithful with respect to the usual 
reduction relation of $\PCF$ --- generated by the inequations of \autoref{ex:pcf_ineqs} (and some more, see \cite{Plotkin1977223}) ---
and beta reduction of $\ULC$.
For space reasons, we cannot present this example here; 
we refer to \cite{ahrens_phd}.
This example --- initiality of the types and terms of $\PCF$ with its reductions, and a translation to $\ULC$ with beta reduction via associated category--theoretic iteration operator ---
has also been implemented in the proof assistant \textsf{Coq}. 
The source files and documentation are available on \sourceurl.
\end{example}

\appendix

\bibliographystyle{splncs03}
\bibliography{literature}
\end{document}